\documentclass[journal]{IEEEtran}
\usepackage{graphicx}
\usepackage{booktabs}
\usepackage{cite}
\usepackage{capt-of}
\usepackage{algorithm}
\usepackage{algpseudocode}

\usepackage{amsmath,amssymb,amsfonts,bm}
\usepackage{subcaption}
\usepackage{booktabs}
\usepackage{textcomp}
\usepackage{xcolor}
\usepackage{multirow}

\usepackage{tikz}
\usetikzlibrary{decorations.pathreplacing}   

\newtheorem{theorem}{Theorem}
\newtheorem{assumption}{Assumption}
\newtheorem{remark}{Remark}
\newtheorem{lemma}{Lemma}

\long\def\comment#1{}

\newfont{\bbb}{msbm10 scaled 700}

\newfont{\bb}{msbm10 scaled 1100}

\renewcommand{\det}{{\hbox{det}}}

\renewcommand{\arg}{{\hbox{arg}}}

\begin{document}

\title{CSI-Free Positioning of Movable Antennas for IoT Networks: A Compositional Kernelized Bandit}

\author{Wonseok Choi,~\IEEEmembership{Graduate Student Member,~IEEE,} Yunseob Tae,~\IEEEmembership{Graduate Student Member,~IEEE,} \\Jeongjae Lee,~\IEEEmembership{Member,~IEEE,} and~Songnam~Hong,~\IEEEmembership{Senior Member,~IEEE}
        \thanks{W. Choi, Y. Tae, and S. Hong are with the Department of Electronic Engineering, Hanyang University, Seoul, Korea. (e-mail: \{ryan4975, tys7524, snhong\}@hanyang.ac.kr).}

        \thanks{J. Lee is with the Department of Electronic Engineering, Hanyang University, Seoul, Korea, and the Ming Hsieh Department of Electrical Engineering, University of Southern California, Los Angeles, CA, USA. (e-mail: jl\_939@usc.edu).}
        
}

\maketitle

\begin{abstract}
Movable antenna (MA) arrays reshape the propagation channel by mechanically changing the element positions, which suits Internet-of-Things access points whose devices cannot adapt on their own. Existing position optimization assumes the channel is known at every candidate configuration. In practice a configuration can be evaluated only after the array has moved there, over several slots limited by the actuator speed, during which the channel changes. We formulate the positioning of $M$ MAs serving $K<M$ devices without channel state information as a nonstationary kernelized bandit over the joint configuration space, driven by a single scalar rate feedback per slot under a reachability constraint and an actuation-energy cost. Since the information gain of a standard kernel grows exponentially with $M$, and the multiuser sum rate depends on the positions through a sum of per-antenna terms, we design a compositional kernel that retains interactions up to antenna pairs. On this kernel we propose CoMoveUCB, which selects a target configuration over the whole feasible set by coordinate ascent and retains it under a persistence rule. We prove sublinear dynamic regret with a polynomial dependence on $M$. Simulations show that CoMoveUCB outperforms fixed, measure-then-optimize, and reachability-confined benchmarks across loadings, confirming the value of repositioning an MA array under realistic actuation limits.
\end{abstract}

\begin{IEEEkeywords}
Movable antenna, Internet of Things (IoT), kernelized bandits, dynamic regret, online learning.
\end{IEEEkeywords}

\section{Introduction}
\label{sec:intro}

In a conventional antenna array the element positions are fixed. The channel realized at those positions is determined by the propagation environment alone and cannot be modified by the transceiver. Movable antenna (MA) systems~\cite{zhu2024movable} treat the positions as design variables, repositioning the elements mechanically within a confined region. Since the fading pattern decorrelates over distances on the order of a wavelength, a displacement of a few centimeters is sufficient to replace a deep fade with a favorable realization, and a larger displacement can move an element into or out of a blockage. In a multiuser setting the positions additionally determine the correlation between the devices' channel vectors, and therefore the degree to which a precoder can separate them. Position optimization of this kind has been applied to beamforming and rate maximization~\cite{ma2024multibeam,feng2024weighted}. MA arrays are particularly suited to an Internet-of-Things (IoT) access point (AP), whose devices are typically too simple and too power-constrained to adapt on their own: repositioning the antennas at the AP improves the link without requiring any action at the devices.

Mechanical actuation distinguishes an MA from a fluid antenna, which reconfigures by switching electronically among predefined ports~\cite{wong2021fluid}. Switching is instantaneous, whereas repositioning proceeds at the speed of the actuator. An antenna therefore crosses the region over many slots, transmits from intermediate positions rather than from its destination, and consumes energy in the motor-driven mechanism along the way~\cite{ning2025movable}. As carrier frequencies increase and coherence times shorten, this overhead occupies a non-negligible fraction of the transmission time. In an IoT deployment, where the AP may run on a battery or on harvested energy, the mechanical energy is itself a first-order cost. Several recent formulations account for the movement cost within the optimization: minimizing the repositioning delay for prescribed target positions~\cite{li2026trajectory}, maximizing throughput or energy efficiency under movement-delay penalties~\cite{wang2026throughput,ding2026energy}, and, in our prior work, movement-delay-aware sum-rate maximization for a linear MA array~\cite{tae2026delay}. These formulations, in common with most of the position-optimization literature, assume the channel is perfectly known in advance, which allows the reward of any candidate configuration to be evaluated without visiting it.

This premise is difficult to sustain in practice, since the reward at a new configuration can be measured only after the antennas have been relocated there. Acquiring channel state information (CSI) across the spatial domain through pilots incurs prohibitive overhead, which grows with the number of configurations to be probed and, in an IoT network, is paid by the devices in battery life. This has motivated CSI-free position optimization operating directly on received-signal measurements~\cite{zeng2025csi}, treating the configuration as the variable of a static optimization and refining it by zeroth-order gradient ascent. In the present setting, however, the reward drifts while the array is in transit, so a measurement is already outdated when the next one is collected, and each measurement requires a relocation limited by the actuator speed, at the cost of transmission time. Position selection is therefore a sequential decision problem rather than a one-shot optimization. For a \emph{single} MA, our prior work~\cite{choi2026positioning} formulated it as a nonstationary kernelized bandit under a reachability constraint and obtained sublinear dynamic regret.

An IoT AP typically serves multiple devices simultaneously, and spatial multiplexing of these devices requires an array. Extending the design of~\cite{choi2026positioning} to $M$ antennas is not a matter of running $M$ independent copies of it: the reward, the estimator, and the target selection are each affected differently, and the three resulting obstacles determine the structure of the proposed algorithm. First, the reward does not decompose across the antennas, since the signal-to-interference-plus-noise ratio~(SINR) of each device is governed by the correlation between the devices' channel vectors, a property of the array as a whole. Independent per-antenna learners would both discard this coupling and render each learner's environment nonstationary through the motion of the others, a nonstationarity generated by the algorithm itself that no assumption on the channel can bound. Second, a single scalar reward is observed per slot for an $M$-dimensional action, and nonstationarity precludes accumulating these observations: the estimator operates on a sliding window whose sample budget is independent of $M$, while the space to be covered is not. A generic kernel on the joint configuration space would further incur an information gain that grows exponentially with $M$. Third, the per-slot target is a configuration rather than a single position, and the maximization of the acquisition function, on which the regret guarantee rests, becomes a combinatorial problem over the joint configuration space that must be solved in every slot, with its approximation error entering the regret directly. To the best of our knowledge, CSI-free positioning of multiple movable antennas under these constraints has not been addressed.

The main contributions of this paper are as follows.
\begin{itemize}

\item We formulate CSI-free positioning of a movable antenna array serving multiple IoT devices as an online learning problem over the joint configuration of the array, in which the antennas are coupled through the multiuser sum rate and cannot be learned independently.

\item We design a compositional kernel that retains interactions up to antenna pairs, motivated by the additive per-antenna structure of the multiuser sum rate. It keeps the complexity of learning polynomial in the number of antennas, rather than exponential as under a generic kernel.

\item We propose CoMoveUCB, which combines the compositional kernel with a persistent target-and-approach scheme: a target is selected over the combinatorially large configuration space of the array, retained under a persistence rule, and approached within the speed and energy limits of mechanical actuation. Under standard assumptions, CoMoveUCB achieves sublinear dynamic regret.

\item Establishing this guarantee requires two results beyond the single-antenna analysis of~\cite{choi2026positioning}: a subadditivity bound on the information gain, which keeps the regret polynomial in the array size, and a travel bound for antennas moving toward a common target.

\item We evaluate CoMoveUCB against fixed, random, measure-then-optimize, and reachability-confined learning policies, all charged the same movement cost. CoMoveUCB outperforms every benchmark, and an ablation attributes most of the gain to its optimistic target selection and the proposed compositional kernel.

\end{itemize}

The remainder of this paper is organized as follows. Section~\ref{sec:system} describes the system model and formulates the positioning task as an online learning problem measured by dynamic regret. Section~\ref{sec:algorithm} presents the compositional kernel and the CoMoveUCB algorithm. Section~\ref{sec:analysis} establishes the regret guarantee, and Section~\ref{sec:sim} evaluates CoMoveUCB against the benchmarks. Section~\ref{sec:conclusion} concludes the paper.

\emph{Notation:} Boldface lowercase and uppercase letters denote vectors and matrices, respectively. $(\cdot)^{\mathsf{T}}$ and $(\cdot)^{\mathsf{H}}$ denote the transpose and the conjugate transpose, $\|\cdot\|_{1}$ and $\|\cdot\|_{\infty}$ the $\ell_{1}$ and $\ell_{\infty}$ norms, $\lambda_{i}(\cdot)$ and $\lambda_{\max}(\cdot)$ the $i$th and the largest eigenvalue, $\mathbf{A}\succeq\mathbf{B}$ that $\mathbf{A}-\mathbf{B}$ is positive semidefinite, and $\mathbb{E}[\cdot]$ the expectation. $O(\cdot)$ and $\Theta(\cdot)$ denote asymptotic upper and tight bounds, and $\tilde{O}(\cdot)$ hides polylogarithmic factors.

\begin{figure}[t]
\centering
\includegraphics[width=\columnwidth]{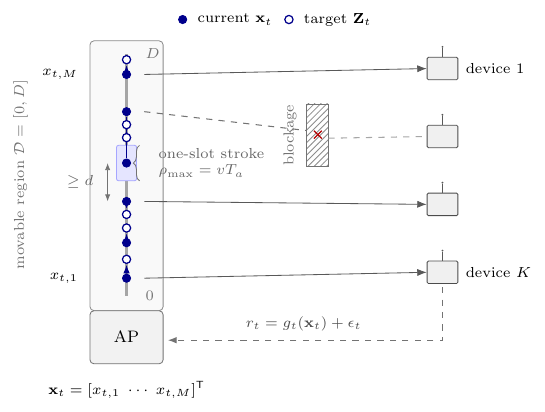}
\caption{Downlink MA system. An AP with $M$ movable antennas on a linear region $\mathcal{D}=[0,D]$, subject to a minimum spacing $d$, serves $K<M$ IoT devices. The array advances at most one stroke $\rho_{\max}$ per slot toward the target $\mathbf{Z}_{t}$, transmitting from intermediate APVs on the way, and observes a single noisy scalar $r_{t}$ per slot.}
\label{fig:system}
\end{figure}

\section{System Model and Problem Formulation}\label{sec:system}

We describe the MA array and its actuation constraints, and formulate the positioning task as a sequential decision problem measured by dynamic regret.

\subsection{System Model}
\label{subsec:sysmodel}

We consider a downlink scenario, illustrated in Fig.~\ref{fig:system}, in which an AP equipped with $M$ MAs serves $K$ single-antenna IoT devices, with $M>K$. Such loading is typical of IoT deployments, where a moderate number of low-complexity devices are served by a single infrastructure node. The MAs are deployed along a linear region $\mathcal{D}\triangleq[0,D]$, and time is divided into slots of duration $T_{\mathrm{s}}$, indexed by $t\in\{1,2,\dots\}$. The antenna position vector (APV) in slot $t$ is denoted by
$\mathbf{x}_{t}=[x_{t,1}\ \cdots\ x_{t,M}]^{\mathsf{T}}$, with a given initial APV $\mathbf{x}_{0}$. A minimum spacing $d>0$ is imposed to limit mutual coupling, and every APV therefore lies in
\begin{equation}
  \mathcal{C}\triangleq\big\{\mathbf{x}\in\mathcal{D}^{M}:
  x_{m+1}-x_{m}\ge d,\ m=1,\dots,M-1\big\},
  \label{eq:array-set}
\end{equation}
which also fixes the ordering of the antennas and hence their consistent indexing across slots.

Each slot comprises an actuation interval of duration $T_{\mathrm{a}}$, during which the antennas are repositioned by mechanical actuators of maximum speed $v$, followed by a transmission interval of duration $T_{\mathrm{s}}-T_{\mathrm{a}}$. Since the $M$ actuators operate in parallel, the actuation interval is determined by the \emph{largest} displacement among the antennas, which the stroke $\rho_{\max}\triangleq vT_{\mathrm{a}}$ bounds:
\begin{equation}
  \mathbf{x}_{t}\in\mathcal{A}(\mathbf{x}_{t-1})\triangleq\big\{\mathbf{x}
  \in\mathcal{C}:\|\mathbf{x}-\mathbf{x}_{t-1}\|_{\infty}\le\rho_{\max}\big\}.
  \label{eq:reachable}
\end{equation}
The actuation interval is provisioned for $\rho_{\max}$ and held fixed across slots. A policy may select a destination APV $\mathbf{Z}\in\mathcal{C}$ outside the reachable set. Reaching it then takes $\lceil\|\mathbf{Z}-\mathbf{x}_{t-1}\|_{\infty}/\rho_{\max}\rceil$ slots, on each of which the array is \emph{in transit} and transmits from an intermediate APV rather than from $\mathbf{Z}$.

Unlike the actuation delay, the mechanical energy is drawn by every actuator and accumulates over all displacements. The two costs are therefore governed by different norms: the delay by $\|\cdot\|_{\infty}$ as in~\eqref{eq:reachable}, and the energy of slot $t$ by $c_{\mathrm{e}}\|\mathbf{x}_{t}-\mathbf{x}_{t-1}\|_{1}$, where $c_{\mathrm{e}}>0$ converts the energy per unit displacement of an actuator into the units of the rate reward. Displacing a second antenna within the stroke already claimed by the first incurs energy but no additional delay. The coefficient $c_{\mathrm{e}}$ is determined by the actuator hardware and is fixed once the array is deployed.

Let $\mathbf{H}(\mathbf{x};\boldsymbol{\xi}^{(t)})\in\mathbb{C}^{M\times K}$ denote the channel matrix from the array at APV $\mathbf{x}$ to the $K$ devices being served, whose $(m,k)$ entry is the multipath field $h_{k}(x_{m};\boldsymbol{\xi}_{k}^{(t)})$ of~\cite{choi2026positioning}. The parameter vector $\boldsymbol{\xi}^{(t)}$ collects, for each device $k$ and path $\ell=1,\dots,L$, the angle of departure $\theta_{k,\ell}^{(t)}$ and the complex gain $\alpha_{k,\ell}^{(t)}$, together with the position $s_{k}^{(t)}$ of the blockage edge that attenuates the line-of-sight path. These parameters evolve in \emph{event-driven} steps rather than being redrawn in each slot, and two kinds of event occur on different time scales. An \emph{environment event}, such as a blockage clearing, a scatterer entering, or an occupant crossing a link, alters the parameters of the affected link while the device itself remains static or nearly so. A \emph{device swap}, which occurs when a scheduler rotates the devices being served, replaces the channel of the affected link with one independent of its predecessor, and takes place on a longer time scale. Each event concerns a single device and the parameters remain fixed between events: each $\boldsymbol{\xi}_{k}^{(t)}$ is piecewise stationary and changes only at the events of device $k$, the statistics of which are specified in Section~\ref{subsec:setup}. The reward map defined below changes at the events of any of the $K$ devices, and the rate of these events governs the variation budget defined in Section~\ref{subsec:prelim}, on which the regret of Theorem~\ref{thm:regret} depends.

The AP serves the $K$ devices with a linear precoder of the form $\mathbf{W}=\mathbf{H}\mathbf{F}(\mathbf{H}^{\mathsf{H}}\mathbf{H})$, normalized to unit-norm columns, where the matrix function $\mathbf{F}(\cdot)$ specifies the precoder: $\mathbf{F}(\boldsymbol{\Psi})=\boldsymbol{\Psi}^{-1}$ for zero-forcing, $(\boldsymbol{\Psi}+\kappa\mathbf{I})^{-1}$ for its regularized form, and $\mathbf{I}$ for maximum-ratio transmission. Let $\gamma_{k}(\mathbf{x};\boldsymbol{\xi}^{(t)})$ denote the resulting SINR of device $k$. The reward is the effective sum rate
\begin{equation}
  g_{t}(\mathbf{x})=\Big(1-\frac{T_{\mathrm{a}}}{T_{\mathrm{s}}}\Big)
  \sum_{k=1}^{K}\log_{2}\big(1+\gamma_{k}(\mathbf{x};
  \boldsymbol{\xi}^{(t)})\big).
  \label{eq:reward}
\end{equation}
The algorithm observes only~\eqref{eq:reward} and is otherwise agnostic to the choice of precoder.

In each slot the AP observes only a noisy reward at its current APV,
\begin{equation}
  r_{t}=g_{t}(\mathbf{x}_{t})+\epsilon_{t},
  \label{eq:observation}
\end{equation}
a single scalar for an $M$-dimensional action, and must select $\mathbf{x}_{t+1}\in\mathcal{A}(\mathbf{x}_{t})$ from the history $\mathcal{F}_{t}=\{(\mathbf{x}_{\tau},r_{\tau})\}_{\tau\le t}$ alone. No CSI is acquired at the AP; the only feedback is the rate achieved by the devices, aggregated into this scalar.

\subsection{Problem Formulation}
\label{subsec:problem}

The AP seeks a policy $\pi$ that maximizes the reward accumulated over the horizon, net of the actuation energy,
\begin{align}
  \max_{\pi}\ &\mathbb{E}\!\left[\sum_{t=1}^{T}\Big(g_{t}(\mathbf{x}_{t})
  -c_{\mathrm{e}}\big\|\mathbf{x}_{t}-\mathbf{x}_{t-1}\big\|_{1}\Big)\right]
  \nonumber\\
  \text{s.t.}\ &\mathbf{x}_{t}=\pi(\mathcal{F}_{t-1})
  \in\mathcal{A}(\mathbf{x}_{t-1}).
  \label{eq:online}
\end{align}
Problem~\eqref{eq:online} cannot be solved directly, since $\{g_{t}\}$ is revealed only causally and then only through the noisy observations in~\eqref{eq:observation}. As a reference, we consider an oracle that knows the entire reward sequence in advance but starts from the same $\mathbf{x}_{0}$, remains subject to the same reachability constraint, and incurs the same actuation energy. Its optimal cumulative reward is
\begin{equation}
  \mathrm{OPT}_{T}\triangleq
  \max_{\substack{\{\mathbf{x}_{t}\}_{t=1}^{T}:\\
  \mathbf{x}_{t}\in\mathcal{A}(\mathbf{x}_{t-1})}}
  \ \sum_{t=1}^{T}\Big(g_{t}(\mathbf{x}_{t})
  -c_{\mathrm{e}}\big\|\mathbf{x}_{t}-\mathbf{x}_{t-1}\big\|_{1}\Big),
  \label{eq:oracle}
\end{equation}
and the performance of $\pi$ is measured by its \emph{dynamic regret}
\begin{equation}
  \mathcal{R}_{T}(\pi)\triangleq\mathrm{OPT}_{T}
  -\sum_{t=1}^{T}\Big(g_{t}(\mathbf{x}_{t})
  -c_{\mathrm{e}}\|\mathbf{x}_{t}-\mathbf{x}_{t-1}\|_{1}\Big),
  \label{eq:regret}
\end{equation}
a random variable owing to the observation noise and any randomization in the policy. The oracle is not implementable, and unlike the single-antenna
case~\cite{choi2026positioning} it becomes intractable to compute beyond small arrays: \eqref{eq:oracle} is a dynamic program over $\mathcal{C}$, whose discretization on the grid of Section~\ref{subsec:setup} grows combinatorially with $M$ and already has about $4\times10^{9}$ states for $M=6$. It therefore serves as a reference for the analysis rather than as a benchmark in the simulations.

Our objective is to design an online policy whose regret is \emph{sublinear} in $T$ with high probability, so that its time-averaged performance approaches that of the oracle as the horizon grows.

\section{Compositional Kernelized Bandit}
\label{sec:algorithm}

We present the \emph{compositional movement-aware UCB} ({\bf CoMoveUCB}), which estimates the reward map over the joint configuration space with a compositional kernel, selects a target configuration under a movement cost, and retains it under a persistence rule.

\subsection{Reward Estimation with a Compositional Kernel}
\label{subsec:kernel}

The history $\mathcal{F}_{t}$ is exploited along two axes: observations at one APV inform the estimate at nearby ones through spatial correlation, while temporal variation renders old observations outdated. The reward is accordingly estimated, as in~\cite{choi2026positioning}, by kernel ridge regression over a sliding window of the $w$ most recent observations. Let $\mathcal{W}_{t}=\{\tau:\max(1,t-w+1)\le \tau\le t\}$ denote the active window. With the kernel vector $\mathbf{k}_{t}(\mathbf{x})=[k(\mathbf{x},\mathbf{x}_{\tau})]_{\tau\in\mathcal{W}_{t}}$, the Gram matrix $\mathbf{K}_{t}=[k(\mathbf{x}_{\tau},\mathbf{x}_{\tau'})]_{\tau,\tau'\in\mathcal{W}_{t}}$, and the observation vector $\mathbf{r}_{t}=[r_{\tau}]_{\tau\in\mathcal{W}_{t}}$, the posterior mean and variance are
\begin{align}
  \mu_{t}(\mathbf{x})
  &=\mathbf{k}_{t}(\mathbf{x})^{\mathsf{T}}
    \big(\mathbf{K}_{t}+\varrho\mathbf{I}\big)^{-1}\mathbf{r}_{t},
  \label{eq:posterior_mean}\\
  \sigma_{t}^{2}(\mathbf{x})
  &=k(\mathbf{x},\mathbf{x})-\mathbf{k}_{t}(\mathbf{x})^{\mathrm{T}}
    \big(\mathbf{K}_{t}+\varrho\mathbf{I}\big)^{-1}
    \mathbf{k}_{t}(\mathbf{x}),
  \label{eq:posterior_var}
\end{align}
where $\varrho>0$ is a ridge regularization parameter. The window length $w$ balances the estimation noise against the staleness of the observations.

The choice of $k$ determines what the estimator can represent and how fast it learns. A kernel $k$ induces a reproducing kernel Hilbert space (RKHS) $\mathcal{H}_{k}$, the class of functions the estimator can represent, and the confidence bounds on which the algorithm relies are valid only for reward maps in that class. A standard kernel on $\mathcal{C}\subset\mathbb{R}^{M}$, such as the squared exponential, factorizes over the coordinates: two configurations are similar only if all $M$ antenna positions are close, so an observation informs the estimate only in its immediate neighborhood, and the maximum information gain, which governs both the confidence radius and the regret, grows exponentially with $M$~\cite{srinivas2010}.

The structure of the reward suggests a better choice. For precoders of the form in Section~\ref{subsec:sysmodel}, the effective channels $\mathbf{H}^{\mathsf{H}}\mathbf{W}$ and the column norms of $\mathbf{W}$ depend on the channel only through its Gram matrix, and so does the reward: $g_{t}(\mathbf{x})=\phi_{t}(\boldsymbol{\Psi}_{t}(\mathbf{x}))$
for some function $\phi_{t}$, with
\begin{equation}
  \boldsymbol{\Psi}_{t}(\mathbf{x})=\mathbf{H}^{\mathsf{H}}\mathbf{H}
  =\sum_{m=1}^{M}\mathbf{h}(x_{m})^{*}\,\mathbf{h}(x_{m})^{\mathsf{T}},
  \label{eq:gram}
\end{equation}
where $\mathbf{h}(x)=[h_{1}(x),\dots,h_{K}(x)]^{\mathsf{T}}$ collects the channels of the $K$ devices at position $x$ in slot $t$. The Gram matrix is a sum of per-antenna terms, and expanding $\phi_{t}$ around the mean of
this sum orders the reward by interaction: first-order terms are unary, each depending on a single antenna, and second-order terms add pairwise interactions, each a product of an entry contributed by one antenna and an entry contributed by another. Since each antenna alters only one of the $M$ terms, the higher-order terms tend to scale with powers of the ratio between the contribution of one antenna and the smallest eigenvalue of $\boldsymbol{\Psi}_{t}$, and weaken as $\boldsymbol{\Psi}_{t}$ becomes well conditioned, as is typical when $K/M$ is small. The additive structure holds for any channel model, while the conditioning depends on the channel; together they suggest that a second-order truncation improves as $K/M$ decreases and grows looser as $K/M$ approaches unity. We therefore model the reward with a compositional kernel that retains interactions up to the second order,
\begin{align}
  &k(\mathbf{x},\mathbf{x}')=\frac{\sigma_{f}^{2}}{M+\binom{M}{2}}
  \Bigg[\sum_{m=1}^{M}k_{\mathrm{u}}(x_{m},x_{m}')\nonumber\\
  &+\sum_{i<j}k_{\mathrm{u}}(x_{i},x_{i}')\,k_{\mathrm{u}}(x_{j},x_{j}')\,
  e^{-\frac{((x_{i}-x_{j})-(x_{i}'-x_{j}'))^{2}}{2\varsigma_{\mathrm{p}}^{2}}}
  \Bigg],
  \label{eq:kernel}
\end{align}
where $k_{\mathrm{u}}$ is the physics-informed kernel of~\cite{choi2026positioning} with its weights scaled to unit total
variance, and $\varsigma_{\mathrm{p}}>0$ is the length scale in the \emph{separation} of the pair. The entries of $\mathbf{h}(x)^{*}\mathbf{h}(x)^{\mathrm{T}}$, including the single-antenna power profiles $|h_{k}(x)|^{2}$ on the diagonal, oscillate at differences of directional cosines, the kind of spatial variation that $k_{\mathrm{u}}$ was designed to represent, and $k_{\mathrm{u}}$ therefore serves as the unary component. In each pairwise term, the product of the two unary kernels alone has as its RKHS the sums of products of single-antenna functions, which is the form of the second-order terms; these carry the dependence of the pair on the absolute positions, through the blockage attenuation, the angular spread of each path, and the cross terms between distinct paths. The separation factor weights these products by the separation of the pair, and through the constant component of $k_{\mathrm{u}}$ it also represents the part that depends on the pair only through $x_{i}-x_{j}$, which arises from terms pairing the same two paths at both antennas. Each summand is positive semidefinite, and so is~\eqref{eq:kernel}. The prefactor normalizes the kernel so that its output scale $\sup_{\mathbf{x}\in\mathcal{C}}k(\mathbf{x},\mathbf{x})=\sigma_{f}^{2}$ does not grow with $M$; this is a scaling convention, which rescales the RKHS norm, and hence the bound $B$ of Assumption~\ref{as:reward}, without changing the RKHS as a set of functions.

Under~\eqref{eq:kernel}, two configurations are similar whenever they share the position of an antenna or the geometry of a pair, so each observation informs the estimate across all configurations sharing that structure. Unlike the additive kernels of high-dimensional Bayesian optimization~\cite{kandasamy2015high,rolland2018high}, which partition or group the coordinates according to a structure learned from data, \eqref{eq:kernel} is motivated by the additive structure of the Gram matrix~\eqref{eq:gram}, couples every pair of antennas, and builds each component from the physics-informed $k_{\mathrm{u}}$. Its maximum information gain is polynomial rather than exponential in $M$ (Lemma~\ref{lem:infogain}). A numerical functional analysis of variance (ANOVA) of the sum rate shows that the share of interactions among three
or more antennas increases with $K/M$, as the conditioning argument above suggests; the algorithm does not require the approximation to be exact and may be run at any loading. Maximizing the acquisition function exactly remains combinatorial for any kernel, and is addressed in Section~\ref{subsec:ca}.

\begin{figure}[t]
\centering
\includegraphics[width=0.95\columnwidth]{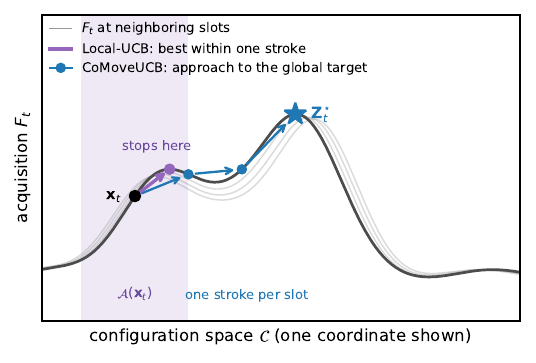}
\caption{Target selection and approach, shown on one coordinate of the acquisition function~\eqref{eq:acq}; the faint curves are $F_{t}$ at neighboring slots. Local-UCB, which maximizes the same $F_{t}$ over the reachable set $\mathcal{A}(\mathbf{x}_{t})$ alone, settles on the peak the array already stands on. CoMoveUCB maximizes over the whole of $\mathcal{C}$ and approaches the target $\mathbf{Z}_{t}^{\star}$ one stroke per slot.}
\label{fig:selection}
\end{figure}

\subsection{Movement-Aware Selection over the Joint Configuration}
\label{subsec:selection}

A displacement incurs three costs, all proportional to distance: the actuation energy of~\eqref{eq:online}; the reward lost while in transit, since the array covers at most $\rho_{\max}$ per slot and transmits from intermediate APVs on the way; and the staleness of the estimate, as $g_{t}$ drifts over those same slots. Only the first is an energy expenditure and only it appears explicitly in~\eqref{eq:regret}, the other two entering through $g_{t}(\mathbf{x}_{t})$, since the array collects the reward of the APV it occupies rather than that of its target. Each unit of travel occupies $1/\rho_{\max}$ slots, over each of which the reward forgone and the drift of the estimate are of order $\sigma_{f}$, so both transit-induced costs scale as $\sigma_{f}/\rho_{\max}$ per unit distance. We price them relative to the energy through a single weight $\eta>0$,
\begin{equation}
  \tilde{c}\;\triangleq\;\frac{1}{w}\Big(c_{\mathrm{e}}
  +\frac{\eta\,\sigma_{f}}{\rho_{\max}}\Big),
  \label{eq:price}
\end{equation}
where $\sigma_{f}^{2}$ is the output scale of~\eqref{eq:kernel}, and the division by the window length $w$ amortizes the one-time cost of a displacement over the $w$ slots during which the resulting observation remains in the window.

The acquisition function is
\begin{equation}
  F_{t}(\mathbf{Z})\triangleq\mu_{t}(\mathbf{Z})
  +\beta\sigma_{t}(\mathbf{Z})
  -\tilde{c}\,\|\mathbf{Z}-\mathbf{x}_{t}\|_{1},
  \label{eq:acq}
\end{equation}
where the confidence parameter $\beta>0$ is chosen so that $\mu_{t}+\beta\sigma_{t}$ upper-bounds $g_{t}$ with high probability. The target is the joint maximizer
\begin{equation}
  \mathbf{Z}_{t}^{\star}=\arg\max_{\mathbf{Z}\in\mathcal{C}}F_{t}(\mathbf{Z}),
  \label{eq:target}
\end{equation}
retained or replaced by the persistence rule
\begin{equation}
  \mathbf{Z}_{t}=
  \begin{cases}
    \mathbf{Z}_{t-1}, & F_{t}(\mathbf{Z}_{t-1})\ge
      F_{t}(\mathbf{Z}_{t}^{\star})-\Delta,\\[4pt]
    \mathbf{Z}_{t}^{\star}, & \text{otherwise},
  \end{cases}
  \label{eq:hysteresis}
\end{equation}
with $\mathbf{Z}_{0}\triangleq\mathbf{x}_{0}$ and a margin $\Delta>0$. Optimism, that is, selection by the upper confidence bound, does not by itself bound the movement, because the estimator forgets: the uncertainty at an APV the array has left grows back once its observations leave the window, so an acquisition function maximized afresh at every slot may redirect the array before it reaches any target. Holding the target unless some configuration is better by $\Delta$ suppresses such redirection, at a loss of at most $\Delta$ per slot.

The target is approached as far as the actuators permit,
\begin{equation}
  x_{t+1,m}=x_{t,m}+\operatorname{sgn}(z_{t,m}-x_{t,m})
  \min\big(\rho_{\max},|z_{t,m}-x_{t,m}|\big),
  \label{eq:approach}
\end{equation}
for all $m$, where $z_{t,m}$ is the $m$th component of $\mathbf{Z}_{t}$. This update is the projection of $\mathbf{Z}_{t}$ onto the $\ell_{\infty}$ ball of radius $\rho_{\max}$ around $\mathbf{x}_{t}$, which is separable across coordinates. Since clipping each coordinate to an interval of common width preserves the ordering and the spacing of the antennas, the update remains in $\mathcal{C}$ whenever $\mathbf{x}_{t}$ and $\mathbf{Z}_{t}$ do, and is therefore also the projection onto $\mathcal{A}(\mathbf{x}_{t})$. Together, \eqref{eq:target}--\eqref{eq:approach} form the persistent target-and-approach scheme. As in the single-antenna design, separating the target from the step toward it is what allows a distant configuration to be selected at all (Fig.~\ref{fig:selection}), and it is also what the analysis requires: the maximization in~\eqref{eq:target} ranges over the whole of $\mathcal{C}$, so the regret analysis may compare $\mathbf{Z}_{t}$ against an arbitrary fixed APV rather than against the configurations reachable within one slot.

\subsection{Coordinate Ascent for Target Selection}
\label{subsec:ca}

With the kernel~\eqref{eq:kernel} the posterior mean and the movement term are sums of unary and pairwise terms, so already for $\beta=0$ the maximization in~\eqref{eq:target} amounts to maximum a posteriori inference on a complete pairwise graph, which is NP-hard in general~\cite{wainwright2008}. Among the standard approximations---max-product belief propagation, linear-programming relaxations, and coordinate ascent---we adopt the last, for a reason specific to~\eqref{eq:hysteresis}. Initialized at $\mathbf{Z}^{(0)}=\mathbf{Z}_{t-1}$ and with the incumbent retained among the candidates, as specified below, coordinate ascent never decreases $F_{t}$ along the sweep, so its output satisfies $F_{t}(\mathbf{Z}_{t}^{\star})\ge F_{t}(\mathbf{Z}_{t-1})$ by construction, which is precisely the comparison the persistence rule makes. Belief propagation offers no such guarantee on a graph with cycles, a relaxation would have to be solved and rounded afresh in every slot, and neither applies directly to the confidence term, which is not pairwise.

Writing $\mathbf{Z}^{(r)}$ for the iterate after $r$ rounds, each round updates the coordinates in turn,
\begin{equation}
  z^{(r)}_{m}=\arg\max_{z\in\mathcal{Z}^{(r)}_{m}}\ F_{t}\big(
  z^{(r)}_{1},\dots,z^{(r)}_{m-1},\,z,\,
  z^{(r-1)}_{m+1},\dots,z^{(r-1)}_{M}\big),
  \label{eq:coord-ascent}
\end{equation}
for $m=1,\dots,M$, where the candidate set $\mathcal{Z}^{(r)}_{m}$ collects the points of the uniform grid $\mathcal{X}_{g}=\{0,\Delta_{g},\dots,D\}$ of $G$ points, with $\Delta_{g}\ll\lambda/2$, that satisfy the spacing constraint of~\eqref{eq:array-set} given the current values of the other coordinates, together with the incumbent $z^{(r-1)}_{m}$, so that $\mathcal{Z}^{(r)}_{m}$ is never empty. After $R$ rounds we set $\mathbf{Z}^{\star}_{t}\gets\mathbf{Z}^{(R)}$. Each update in~\eqref{eq:coord-ascent} is exact because $F_{t}$ is evaluated on a finite set, and since that set contains the incumbent, $F_{t}$ is nondecreasing along the sweep. With the fixed number $R$ of rounds used here, $\mathbf{Z}^{(R)}$ is no worse than $\mathbf{Z}^{(0)}$ but is in general neither a coordinatewise maximum nor the global maximizer of~\eqref{eq:target}.

In~\eqref{eq:coord-ascent} only the kernel terms involving $z_{m}$ change, namely one unary term and the $M-1$ pairwise terms containing $z_{m}$, so the posterior mean at a candidate is updated in $O(Mw)$, while the posterior variance requires a triangular solve against the Cholesky factor of the slot, in $O(w^{2})$. A single update therefore costs $O\big(G(Mw+w^{2})\big)$ and the $R$ rounds $O\big(RMGw(M+w)\big)$, which is polynomial in $M$, against the $O(G^{M})$ of an exhaustive maximization. The analysis of Section~\ref{sec:analysis} assumes that~\eqref{eq:target} is solved exactly, and Remark~\ref{rem:exact} reports the gap between $\mathbf{Z}^{(R)}$ and the exhaustive maximizer for small $M$, where the latter is computable.

\begin{algorithm}[t]
\caption{CoMoveUCB}
\label{alg:comoveucb}
\begin{algorithmic}[1]
\Require actuation parameters $\rho_{\max}$, $c_{\mathrm{e}}$, $d$;
  kernel $k_{\mathrm{u}}$, $\varsigma_{\mathrm{p}}$, $\sigma_{f}$;
  window $w$, weight $\eta$, margin $\Delta$, confidence $\beta$,
  ridge $\varrho$, rounds $R$, grid $\mathcal{X}_{g}$;
  initial APV $\mathbf{x}_{0}\in\mathcal{C}$ and initial
  observations $\mathcal{W}_{0}$
\State $\tilde{c}\gets\big(c_{\mathrm{e}}+\eta\sigma_{f}/\rho_{\max}\big)/w$
  \Comment{\eqref{eq:price}}
\State $\mathcal{W}\gets\mathcal{W}_{0}$,\quad
  $\mathbf{Z}_{0}\gets\mathbf{x}_{0}$,\quad
  $\mathbf{x}_{1}\gets\mathbf{x}_{0}$
\For{$t=1,2,\dots,T$}
  \State Observe $r_{t}=g_{t}(\mathbf{x}_{t})+\epsilon_{t}$;
    append $(\mathbf{x}_{t},r_{t})$ to $\mathcal{W}$ and retain the
    $w$ most recent entries
  \State Update $\mu_{t},\sigma_{t}$ from $\mathcal{W}$
    \Comment{\eqref{eq:posterior_mean}--\eqref{eq:posterior_var}}
  \State $\mathbf{Z}^{(0)}\gets\mathbf{Z}_{t-1}$
  \For{$r=1,\dots,R$}
    \For{$m=1,\dots,M$}
      \State $z^{(r)}_{m}\gets\arg\max_{z\in\mathcal{Z}^{(r)}_{m}}
        F_{t}(\cdot)$
        \Comment{\eqref{eq:coord-ascent}}
    \EndFor
  \EndFor
  \State $\mathbf{Z}_{t}^{\star}\gets\mathbf{Z}^{(R)}$
  \If{$F_{t}(\mathbf{Z}_{t-1})<F_{t}(\mathbf{Z}_{t}^{\star})-\Delta$}
    \State $\mathbf{Z}_{t}\gets\mathbf{Z}_{t}^{\star}$
  \Else
    \State $\mathbf{Z}_{t}\gets\mathbf{Z}_{t-1}$
  \EndIf
  \Comment{\eqref{eq:hysteresis}}
  \State $\mathbf{x}_{t+1}\gets$ coordinatewise step from
    $\mathbf{x}_{t}$ toward $\mathbf{Z}_{t}$
    \Comment{\eqref{eq:approach}}
\EndFor
\end{algorithmic}
\end{algorithm}

\subsection{Complexity}
\label{subsec:complexity}

The complete procedure is summarized in Algorithm~\ref{alg:comoveucb}. Each slot requires the Cholesky factor of the regularized Gram matrix, which costs $O(w^{3})$ from scratch but only $O(w^{2})$ when updated and downdated by one observation as the window slides, together with $O(M^{2}w)$ operations to evaluate the kernel between the new observation and the window. The $R$ rounds of coordinate ascent~\eqref{eq:coord-ascent} cost $O\big(RMGw(M+w)\big)$, as derived in Section~\ref{subsec:ca}, the factor $M$ inside the parentheses arising from the pairwise terms of~\eqref{eq:kernel} and the factor $w$ from the posterior variance, and dominate the per-slot cost. The dependence on $M$ is thus polynomial, in contrast to the $O(G^{M})$ of an exhaustive joint maximization.

For $M=1$ the pairwise terms of~\eqref{eq:kernel} vanish and the kernel becomes $\sigma_{f}^{2}k_{\mathrm{u}}$, the $\ell_{\infty}$ and $\ell_{1}$ norms coincide with the absolute value, and the coordinate ascent turns into an exhaustive search over $\mathcal{X}_{g}$. CoMoveUCB then reduces to the single-antenna algorithm of~\cite{choi2026positioning}.

\section{Regret Analysis}
\label{sec:analysis}

We establish a sublinear bound on the dynamic regret~\eqref{eq:regret}. The proof follows the decomposition of~\cite{choi2026positioning}, and its steps carry over to the joint configuration space, with $\mathcal{D}$ replaced by $\mathcal{C}$, except in three places. The information gain of the compositional kernel~\eqref{eq:kernel} must be bounded anew, and Section~\ref{subsec:infogain} shows it to be polynomial in $M$. The travel of $M$ coordinates moving under a common target, with delay and energy governed by different norms, is bounded in Section~\ref{subsec:travel}. Finally, the exact maximization of the acquisition function, which~\cite{choi2026positioning} obtains by enumeration, becomes an assumption here, and its relaxation is discussed in Remark~\ref{rem:exact}.

\subsection{Preliminaries}
\label{subsec:prelim}

The analysis rests on three assumptions. The first two are standard in kernelized bandits~\cite{srinivas2010,chowdhury2017}, and the third, which those analyses make implicitly, is stated explicitly here because the maximization is combinatorial; it separates the statistical problem from the combinatorial one.

\begin{assumption}[Reward]
\label{as:reward}
For every $t$, $g_{t}$ belongs to the RKHS $\mathcal{H}_{k}$ of the kernel~\eqref{eq:kernel} with $\|g_{t}\|_{\mathcal{H}_{k}}\le B$ and
$0\le g_{t}(\mathbf{x})\le g_{\max}$ for all $\mathbf{x}\in\mathcal{C}$.
\end{assumption}

\begin{assumption}[Noise]
\label{as:noise}
The observation noise $\epsilon_{t}$ in~\eqref{eq:observation} is conditionally zero-mean and $\sigma_{\epsilon}$-sub-Gaussian given the history.
\end{assumption}

\begin{assumption}[Exact target selection]
\label{as:exact}
The maximization~\eqref{eq:target} is solved exactly at every slot.
\end{assumption}

Assumption~\ref{as:exact} holds by construction whenever $\mathcal{X}_{g}^{M}$ is small enough to enumerate, as for a single antenna or a small array, and is approximated by the coordinate ascent of Section~\ref{subsec:ca} otherwise; Remark~\ref{rem:exact} quantifies the resulting gap where enumeration is feasible. The nonstationarity is measured by the variation budget
\begin{equation}
  V_{T}\triangleq\sum_{t=2}^{T}\big\|g_{t}-g_{t-1}\big\|_{\infty},
  \label{eq:variation}
\end{equation}
and the statistical difficulty by the maximum information gain
\begin{equation}
  \gamma_{n}(k)\triangleq\max_{A}
  \tfrac{1}{2}\log\det\!\big(\mathbf{I}
  +\varrho^{-1}\mathbf{K}_{A}\big),
  \label{eq:infogain}
\end{equation}
where the maximum is over multisets $A$ of $n$ points of $\mathcal{C}$ and $\mathbf{K}_{A}$ is the Gram matrix of $k$ on $A$. We write $\gamma_{n}$ for $\gamma_{n}(k)$ with the kernel~\eqref{eq:kernel}.

\subsection{Information Gain of the Compositional Kernel}
\label{subsec:infogain}

The maximum information gain~\eqref{eq:infogain} controls the confidence radius and the estimation term of the regret, and its growth in $M$ decides whether Theorem~\ref{thm:regret} remains useful for an array. The kernel~\eqref{eq:kernel} is a sum of $M+\binom{M}{2}$ components, each depending on at most two coordinates. Lemma~\ref{lem:subadd} bounds the gain of a sum of kernels by the sum of the individual gains, and Lemma~\ref{lem:infogain} applies it to~\eqref{eq:kernel}, yielding $O\big(M^{2}(\log n)^{3}\big)$ in place of the $O\big((\log n)^{M+1}\big)$ of a squared exponential kernel on the joint space~\cite{srinivas2010}, at the cost of the model error discussed in Section~\ref{subsec:kernel}.

\vspace{0.1cm}
\begin{lemma}[Subadditivity of the information gain]
\label{lem:subadd}
Let $k=k^{(1)}+k^{(2)}$ with $k^{(1)},k^{(2)}$ positive semidefinite. Then $\gamma_{n}(k)\le\gamma_{n}(k^{(1)})+\gamma_{n}(k^{(2)})$ for every $n$.
\end{lemma}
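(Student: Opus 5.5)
The plan is to prove the inequality for an arbitrary fixed multiset $A$ of $n$ points and then take the maximum. Write $\mathbf{K}_{A}=\mathbf{K}^{(1)}_{A}+\mathbf{K}^{(2)}_{A}$, where both Gram matrices are positive semidefinite because $k^{(1)}$ and $k^{(2)}$ are. After scaling by $\varrho^{-1}$, it suffices to show that for any positive semidefinite $\mathbf{P},\mathbf{Q}\in\mathbb{R}^{n\times n}$,
\begin{equation*}
  \log\det(\mathbf{I}+\mathbf{P}+\mathbf{Q})\le\log\det(\mathbf{I}+\mathbf{P})+\log\det(\mathbf{I}+\mathbf{Q}).
\end{equation*}
Given this, we have $\tfrac12\log\det(\mathbf{I}+\varrho^{-1}\mathbf{K}_{A})\le\tfrac12\log\det(\mathbf{I}+\varrho^{-1}\mathbf{K}^{(1)}_{A})+\tfrac12\log\det(\mathbf{I}+\varrho^{-1}\mathbf{K}^{(2)}_{A})$. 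Each term on the right is at most $\gamma_{n}(k^{(1)})$ and $\gamma_{n}(k^{(2)})$ respectively, since the same multiset $A$ is admissible in both maximizations. Maximizing the left-hand side over $A$ then gives the lemma.

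For the matrix inequality, I would set $\mathbf{M}=\mathbf{I}+\mathbf{P}\succ0$ and factor
\begin{equation*}
  \det(\mathbf{I}+\mathbf{P}+\mathbf{Q})=\det(\mathbf{M})\det\big(\mathbf{I}+\mathbf{M}^{-1/2}\mathbf{Q}\mathbf{M}^{-1/2}\big).
\end{equation*}
This reduces the claim to $\det(\mathbf{I}+\mathbf{M}^{-1/2}\mathbf{Q}\mathbf{M}^{-1/2})\le\det(\mathbf{I}+\mathbf{Q})$. Because $\mathbf{M}\succeq\mathbf{I}$, we have $\mathbf{M}^{-1}\preceq\mathbf{I}$. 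Writing $\mathbf{Q}=\mathbf{Q}^{1/2}\mathbf{Q}^{1/2}$ and applying Sylvester's determinant identity gives $\det(\mathbf{I}+\mathbf{M}^{-1/2}\mathbf{Q}\mathbf{M}^{-1/2})=\det(\mathbf{I}+\mathbf{Q}^{1/2}\mathbf{M}^{-1}\mathbf{Q}^{1/2})$. Moreover, $\mathbf{Q}^{1/2}\mathbf{M}^{-1}\mathbf{Q}^{1/2}\preceq\mathbf{Q}$.

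The last ingredient is that $\det(\mathbf{I}+\cdot)$ is monotone in the Loewner order. I would justify this through eigenvalue monotonicity, a consequence of Courant--Fischer: $\mathbf{A}\preceq\mathbf{B}$ implies $\lambda_{i}(\mathbf{A})\le\lambda_{i}(\mathbf{B})$ for every $i$. The determinant is the product of the eigenvalues $1+\lambda_{i}$, each of which is positive, so the inequality follows.

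None of the steps is deep. The one requiring the most care is the monotonicity step: I need to keep the square roots and the congruence straight so that the comparison is made between symmetric matrices, which is why I route it through Sylvester's identity. An alternative route is to write $\log\det(\mathbf{I}+\mathbf{P}+\mathbf{Q})-\log\det(\mathbf{I}+\mathbf{P})=\int_{0}^{1}\mathrm{tr}\big((\mathbf{I}+\mathbf{P}+s\mathbf{Q})^{-1}\mathbf{Q}\big)\,ds$ and bound the integrand by $\mathrm{tr}\big((\mathbf{I}+s\mathbf{Q})^{-1}\mathbf{Q}\big)$, using the operator monotonicity of inversion. I would keep this as a fallback. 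The argument extends to any finite number of summands by induction, which is the form Lemma~\ref{lem:infogain} will use for the $M+\binom{M}{2}$ components of~\eqref{eq:kernel}.
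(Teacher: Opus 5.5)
Your proof is correct and follows the paper's argument: fix the multiset $A$, factor $\det(\mathbf{I}+\mathbf{P}+\mathbf{Q})=\det(\mathbf{I}+\mathbf{P})\det(\mathbf{I}+\mathbf{M}^{-1/2}\mathbf{Q}\mathbf{M}^{-1/2})$, bound the second factor by $\det(\mathbf{I}+\mathbf{Q})$ using $\mathbf{M}^{-1}\preceq\mathbf{I}$, and then maximize over $A$. The only difference is in the last step, and it is cosmetic: the paper bounds the eigenvalues of the congruent matrix directly with Ostrowski's theorem, while you use Sylvester's identity to get $\mathbf{Q}^{1/2}\mathbf{M}^{-1}\mathbf{Q}^{1/2}\preceq\mathbf{Q}$ and then eigenvalue monotonicity, which is equally valid.
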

\begin{IEEEproof}
Fix a multiset $A$ of $n$ points and write $\mathbf{A}=\varrho^{-1}\mathbf{K}^{(1)}_{A}$ and $\mathbf{B}=\varrho^{-1}\mathbf{K}^{(2)}_{A}$, both positive semidefinite. Since $\mathbf{I}+\mathbf{A}$ is positive definite,
\begin{equation*}
  \det\!\big(\mathbf{I}+\mathbf{A}+\mathbf{B}\big)
  =\det\!\big(\mathbf{I}+\mathbf{A}\big)\,
   \det\!\big(\mathbf{I}+\mathbf{C}\big),
   \qquad
   \mathbf{C}\triangleq\mathbf{S}\mathbf{B}\mathbf{S},
\end{equation*}
with $\mathbf{S}\triangleq(\mathbf{I}+\mathbf{A})^{-1/2}$. The matrix $\mathbf{C}$ is congruent to $\mathbf{B}$, so by Ostrowski's theorem~\cite{hornjohnson} its eigenvalues are $\lambda_{i}(\mathbf{C})=\theta_{i}\lambda_{i}(\mathbf{B})$ with each $\theta_{i}$ between the smallest and the largest eigenvalue of $\mathbf{S}\mathbf{S}^{\mathsf{T}}=(\mathbf{I}+\mathbf{A})^{-1}$. Since $\mathbf{A}\succeq0$ gives $\mathbf{I}+\mathbf{A}\succeq\mathbf{I}$ and hence $\lambda_{\max}\big((\mathbf{I}+\mathbf{A})^{-1}\big)\le1$, we have $\theta_{i}\le1$ and $\det(\mathbf{I}+\mathbf{C})=\prod_{i}(1+\theta_{i}\lambda_{i}(\mathbf{B}))
\le\prod_{i}(1+\lambda_{i}(\mathbf{B}))=\det(\mathbf{I}+\mathbf{B})$. Thus $\det(\mathbf{I}+\mathbf{A}+\mathbf{B})\le \det(\mathbf{I}+\mathbf{A})\det(\mathbf{I}+\mathbf{B})$ for every $A$.
Taking $\tfrac{1}{2}\log$ and bounding each factor by its maximum over $A$, which can only increase the right-hand side, gives the claim.
\end{IEEEproof}

\vspace{0.1cm}
\begin{lemma}[Information gain of the compositional kernel]
\label{lem:infogain}
Let $\gamma^{\mathrm{u}}_{n}$ and $\gamma^{\mathrm{p}}_{n}$ denote the maximum information gain of a unary and a pairwise summand of~\eqref{eq:kernel}, prefactor included, on $\mathcal{D}$ and $\mathcal{D}^{2}$ respectively, the maximum in~\eqref{eq:infogain} being taken over multisets of $n$ points. Then the kernel~\eqref{eq:kernel} satisfies
\begin{equation}
  \gamma_{n}\;\le\;M\,\gamma^{\mathrm{u}}_{n} +\tbinom{M}{2}\,\gamma^{\mathrm{p}}_{n} \;=\;O\big(M^{2}(\log n)^{3}\big).
  \label{eq:infogain-bound}
\end{equation}
\end{lemma}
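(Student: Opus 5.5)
The inequality in \eqref{eq:infogain-bound} follows from Lemma~\ref{lem:subadd} by induction on the number of summands. I would write the kernel \eqref{eq:kernel} as $k=\sum_{m}k^{(m)}+\sum_{i<j}k^{(ij)}$, where each summand carries the prefactor $\sigma_f^2/(M+\binom{M}{2})$ and is positive semidefinite, as noted after \eqref{eq:kernel}. Applying Lemma~\ref{lem:subadd} $M+\binom{M}{2}-1$ times splits off one summand at a time and gives $\gamma_n(k)\le\sum_m\gamma_n(k^{(m)})+\sum_{i<j}\gamma_n(k^{(ij)})$, with each gain taken over multisets of $\mathcal{C}$.

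The next step passes from $\mathcal{C}$ to the low-dimensional domains. The summand $k^{(m)}$ depends on $\mathbf{x}$ only through $x_m\in\mathcal{D}$, so its Gram matrix on a multiset $A\subset\mathcal{C}$ equals the Gram matrix of the corresponding kernel on the projected multiset in $\mathcal{D}$. Hence $\gamma_n(k^{(m)})\le\gamma^{\mathrm{u}}_n$. The same argument for $k^{(ij)}$, which depends only on $(x_i,x_j)\in\mathcal{D}^2$, gives $\gamma_n(k^{(ij)})\le\gamma^{\mathrm{p}}_n$. The inequality may be strict, since the projection of $\mathcal{C}$ need not fill $\mathcal{D}^2$, but that only helps. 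These two bounds establish the first inequality.

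For the order bound, I would bound $\gamma^{\mathrm{u}}_n$ and $\gamma^{\mathrm{p}}_n$ by quantities independent of $M$. The prefactor $\sigma_f^2/(M+\binom{M}{2})\le\sigma_f^2$, and $\gamma_n$ is monotone in the kernel scale because $\log\det(\mathbf{I}+c\mathbf{K})$ is increasing in $c$. It therefore suffices to bound the gains of $k_{\mathrm{u}}$ on $\mathcal{D}$ and of the pairwise kernel $k_{\mathrm{u}}(x_i,\cdot)\,k_{\mathrm{u}}(x_j,\cdot)\,e^{-(\cdot)^2/2\varsigma_{\mathrm p}^2}$ on $\mathcal{D}^2$.

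\begin{itemize}
\item \emph{Unary term.} For $k_{\mathrm{u}}$, I would import the single-antenna bound from~\cite{choi2026positioning}, which is at most of squared-exponential type in one dimension, namely $O((\log n)^2)$.
\item \emph{Pairwise term.} The separation factor is a Gaussian in a linear function of $(x_i,x_j)$, so it is a smooth kernel on $\mathcal{D}^2$. The product of three bounded smooth kernels on a two-dimensional domain has an eigenvalue sequence decaying like that of a squared-exponential kernel on $\mathbb{R}^2$, i.e.\ $\exp(-c\,s^{1/2})$. By the eigendecay-to-information-gain argument of~\cite{srinivas2010}, this yields $O((\log n)^{3})$.
\item \emph{Combination.} Then $M\cdot O((\log n)^2)+\binom{M}{2}O((\log n)^3)=O(M^2(\log n)^3)$.
\end{itemize}

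I expect the main obstacle to be the pairwise spectral bound. The unary factor $k_{\mathrm{u}}$ is physics-informed, not squared exponential, so the eigendecay of its product with the other factors has to be argued rather than cited. I would first try a feature-map argument. A product kernel's feature space is the tensor product of the factors' feature spaces, so if $k_{\mathrm{u}}$ admits a finite or analytically decaying expansion (a finite sum of cosines at direction-cosine differences, plus a smooth component), the Mercer eigenvalues of the product decay at least like those of the SE factor on $\mathcal{D}^2$. This gives the $(\log n)^{d+1}$ rate with $d=2$. If that fails, I would fall back on a covering-number bound for analytic kernels on a compact two-dimensional set, which gives the same rate.
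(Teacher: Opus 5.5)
Your proof of the first inequality is the paper's: inductive use of Lemma~\ref{lem:subadd}, then the observation that each summand's Gram matrix equals that of the corresponding kernel on the projected multiset in $\mathcal{D}$ or $\mathcal{D}^{2}$. Removing the prefactor by monotonicity of $\log\det(\mathbf{I}+c\mathbf{K})$ in $c$ is sound. It is slightly more careful than the paper's ``constants only'' remark, because it makes the bounds on $\gamma^{\mathrm{u}}_{n}$ and $\gamma^{\mathrm{p}}_{n}$ uniform in $M$.

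The gap is the pairwise rate, which you flag but do not close. A product of ``bounded smooth'' kernels on a two-dimensional domain need not have eigendecay $\exp(-c\,s^{1/2})$. Smoothness guarantees only superpolynomial decay, and the squared-exponential-type rate needs analyticity or Gaussian spectral tails. Your tensor-product sketch does not repair this. The squared-exponential factor you single out, the separation factor, depends on $(x_{i},x_{j})$ only through $x_{i}-x_{j}$, so it has the faster eigendecay of a one-dimensional kernel. Once $k_{\mathrm{u}}$ has infinite-rank components (its Gaussian-damped fringes), the product cannot decay ``at least like'' that factor. The $d=2$ rate you then quote therefore does not follow from what you wrote.

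The missing idea is the spectral domination the paper uses, which avoids Mercer eigenvalues altogether.
\begin{itemize}
\item Decompose $k_{\mathrm{u}}$ as a constant plus finitely many stationary components with Gaussian spectral densities; each fringe contributes a symmetric pair of Gaussian lobes.
\item Expanding the pairwise summand with this decomposition gives finitely many stationary terms. Each depends on one or two linearly independent combinations of $(x_{i},x_{j})$ and has a spectral density built from Gaussian lobes.
\item Such a density is pointwise dominated by $c$ times that of a squared exponential of shorter length scale. By Bochner's theorem, $\mathbf{K}'_{A}\preceq c\,\mathbf{K}^{\mathrm{SE}}_{A}$ on every multiset $A$.
\item Monotonicity of the log-determinant then transfers the known rate $O((\log n)^{d+1})$, $d\in\{1,2\}$, to each term.
\item Lemma~\ref{lem:subadd} sums the finitely many terms to give $\gamma^{\mathrm{p}}_{n}=O((\log n)^{3})$.
\end{itemize}
The same argument, with $O(\log n)$ for the rank-one constant, yields $\gamma^{\mathrm{u}}_{n}=O((\log n)^{2})$ directly instead of importing it. Your analytic-kernel fallback could probably be made rigorous too, but as written it is a plan rather than a proof.
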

\begin{IEEEproof}
The kernel~\eqref{eq:kernel} is the sum of $M$ unary and $\binom{M}{2}$ pairwise summands, each positive semidefinite, and Lemma~\ref{lem:subadd} applied inductively bounds $\gamma_{n}$ by the sum of their individual information gains. The $m$th unary summand, $\nu\,k_{\mathrm{u}}(x_{m},x_{m}')$ with $\nu \triangleq\sigma_{f}^{2}/\big(M+\tbinom{M}{2}\big)$, depends on $\mathbf{x}$ only through $x_{m}$, so its Gram matrix on any multiset $A=\{\mathbf{x}_{1},\dots,\mathbf{x}_{n}\}$ drawn from $\mathcal{C}$ coincides with that of $\nu \,k_{\mathrm{u}}$ on the projected multiset $\{x_{1,m},\dots,x_{n,m}\}$ in $\mathcal{D}$, whose entries may repeat when two configurations share that coordinate. Maximizing over multisets from $\mathcal{C}$ is therefore no larger than maximizing over multisets of $n$ points in $\mathcal{D}$, which is $\gamma^{\mathrm{u}}_{n}$. The same argument applied to the pair $(x_{i},x_{j})$ bounds each pairwise summand by $\gamma^{\mathrm{p}}_{n}$, the projection now landing in $\mathcal{D}^{2}$.

For the rates we use two facts. First, if two stationary kernels $k'$ and $k''$ have spectral densities and that of $k'$ is dominated pointwise by that of $c\,k''$ for a constant $c$, then $c\,k''-k'$ is positive semidefinite by Bochner's theorem~\cite{rasmussen2006}, so that $\mathbf{K}'_{A}\preceq c\,\mathbf{K}''_{A}$ for the Gram matrices on every multiset $A$, and $\gamma_{n}(k')\le\gamma_{n}(c\,k'')$ by the monotonicity of the log-determinant~\cite{hornjohnson}. Second, a squared exponential kernel on a compact subset of $\mathbb{R}^{d}$ has maximum information gain $O\big((\log n)^{d+1}\big)$~\cite{srinivas2010}. This holds for any output scale, since a constant factor on a kernel enters~\eqref{eq:infogain} only through the effective ridge and affects the constants but not the rates.

The kernel $k_{\mathrm{u}}$ is a constant plus $Q+1$ components with Gaussian spectral densities, the $Q$ fringes each contributing a symmetric pair of lobes~\cite{choi2026positioning}. The constant has a rank-one Gram matrix and contributes $O(\log n)$, while each component is dominated by a squared exponential of shorter length scale, so Lemma~\ref{lem:subadd} gives $\gamma^{\mathrm{u}}_{n}=O\big((\log n)^{2}\big)$. Expanding the pairwise summand with this decomposition of $k_{\mathrm{u}}$ yields four kinds of term: the product of the two constants with the separation factor depends on $x_{i}-x_{j}$ alone, as noted in Section~\ref{subsec:kernel}, while the remaining products involve one or both non-constant parts and depend on two linearly independent combinations of $(x_{i},x_{j})$. Each is dominated in the same way by a squared exponential in one or two dimensions, so Lemma~\ref{lem:subadd} gives $\gamma^{\mathrm{p}}_{n}=O\big((\log n)^{3}\big)$. Both rates are unaffected by the prefactor $\nu$ of the summands, which by the second fact enters the constants only. Substituting and using $\binom{M}{2}\le M^{2}$ gives~\eqref{eq:infogain-bound}.
\end{IEEEproof}

\subsection{Bounding the Travel}
\label{subsec:travel}

Let $N_{\mathrm{sw}}$ denote the number of slots at which the persistence rule~\eqref{eq:hysteresis} replaces the target, and call slot $t$ a \emph{transit} slot if $\mathbf{x}_{t+1}\neq\mathbf{Z}_{t}$ and a \emph{dwell} slot otherwise. Optimism controls the reward forgone at each slot but says nothing about the distance the array covers, and the following lemma supplies that missing bound, charging the travel to the number of times the target changes.

\vspace{0.1cm}
\begin{lemma}[Travel]
\label{lem:travel}
Under~\eqref{eq:hysteresis}--\eqref{eq:approach},
\begin{align}
  \sum_{t=1}^{T}\big\|\mathbf{x}_{t}-\mathbf{x}_{t-1}\big\|_{1}
  \le MD\big(N_{\mathrm{sw}}+1\big),\nonumber\\
  T_{\mathrm{trans}}\le\frac{D\big(N_{\mathrm{sw}}+1\big)}{\rho_{\max}},
  \label{eq:travel}
\end{align}
where $T_{\mathrm{trans}}$ is the number of transit slots.
\end{lemma}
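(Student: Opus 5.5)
Partition the horizon into \emph{epochs}, maximal runs of slots over which the target $\mathbf{Z}_{t}$ is constant. By~\eqref{eq:hysteresis} the target changes only at the $N_{\mathrm{sw}}$ replacement slots, so there are at most $N_{\mathrm{sw}}+1$ epochs: the initial one with $\mathbf{Z}_{0}=\mathbf{x}_{0}$, and one per replacement. Both inequalities in~\eqref{eq:travel} then follow from a per-epoch bound added over epochs. Index the displacements by the slot at which they are decided: $\mathbf{x}_{t+1}-\mathbf{x}_{t}$ is determined by $\mathbf{Z}_{t}$ through~\eqref{eq:approach}, and $\mathbf{x}_{1}=\mathbf{x}_{0}$ as in Algorithm~\ref{alg:comoveucb}. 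Hence $\sum_{t=1}^{T}\|\mathbf{x}_{t}-\mathbf{x}_{t-1}\|_{1}$ is a sum over displacements each attributed to exactly one epoch.

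Within an epoch with fixed target $\mathbf{Z}$, the update~\eqref{eq:approach} acts separately on each coordinate. Each $x_{t,m}$ moves monotonically toward $z_{m}$, never overshoots, and stays at $z_{m}$ once it arrives. The distance $|z_{m}-x_{t,m}|$ is therefore nonincreasing, and coordinate $m$ travels in total exactly its initial distance $|z_{m}-x_{s,m}|\le D$, where $s$ is the first slot of the epoch. Both points lie in $\mathcal{D}=[0,D]$, and $\mathcal{C}\subset\mathcal{D}^{M}$. Summing over the $M$ coordinates bounds the $\ell_{1}$ travel of the epoch by $MD$. Summing over at most $N_{\mathrm{sw}}+1$ epochs gives the first inequality. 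Travel is only telescoped within an epoch, so a partial approach cut short by a replacement costs nothing extra: it is charged to the epoch in which it occurred.

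For the transit count, note that slot $t$ is a transit slot iff $\mathbf{x}_{t+1}\neq\mathbf{Z}_{t}$, i.e. iff $\|\mathbf{Z}_{t}-\mathbf{x}_{t}\|_{\infty}>\rho_{\max}$. Indeed, by~\eqref{eq:approach} every coordinate reaches $z_{m}$ exactly when its distance is at most $\rho_{\max}$. Within an epoch, each transit slot therefore has some coordinate still more than $\rho_{\max}$ away. The $\ell_{\infty}$ distance $\|\mathbf{Z}-\mathbf{x}_{t}\|_{\infty}$ then drops by exactly $\rho_{\max}$, because every coordinate at distance greater than $\rho_{\max}$ decreases by $\rho_{\max}$ and every other coordinate lands on its target. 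This distance starts at most $D$ and stays nonnegative. So an epoch contains at most $D/\rho_{\max}$ transit slots, and after the first dwell slot of the epoch no further transit slots occur in it. Summing over epochs yields $T_{\mathrm{trans}}\le D(N_{\mathrm{sw}}+1)/\rho_{\max}$.

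The main obstacle is the bookkeeping at epoch boundaries, together with the exact dropping of the ceiling. Near-integer ratios must not add an extra slot per epoch. The clean route is the strict-inequality characterization above: if $k$ transit slots occur in an epoch, then the initial distance exceeds $(k-1)\rho_{\max}+\rho_{\max}=k\rho_{\max}$, so $k<D/\rho_{\max}$ strictly. The off-by-one in the slot indexing at $t=1$ and at the horizon end also needs attention. I would handle it by the displacement-to-decision-slot correspondence fixed in the first paragraph.
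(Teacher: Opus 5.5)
Your proof is correct and follows the paper's argument. You partition the horizon at the target switches into at most $N_{\mathrm{sw}}+1$ fixed-target journeys, bound each journey's $\ell_{1}$ travel by $MD$ through the monotone coordinatewise approach, and use that on every transit slot the gap $\|\mathbf{Z}-\mathbf{x}_{t}\|_{\infty}$ drops by exactly $\rho_{\max}$. Your added bookkeeping (transit slots forming an initial segment of each epoch, the strict inequality, and charging each displacement to its decision slot) makes explicit what the paper leaves implicit; one wording fix is that a coordinate travels \emph{at most}, not exactly, its initial distance when the epoch ends before it arrives.
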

\begin{IEEEproof}
Partition the horizon at the switch slots into at most $N_{\mathrm{sw}}+1$ maximal journeys on which the target is fixed. Consider one such journey, with target $\mathbf{Z}$ and initial configuration $\mathbf{x}$. By~\eqref{eq:approach} each coordinate moves monotonically toward its target component and halts on arrival, so coordinate $m$ covers a total distance $|z_{m}-x_{m}|\le D$, and the $\ell_{1}$ travel of the journey is at most $MD$. Summing over the journeys gives the first bound.

For the second, write $d_{t}\triangleq\|\mathbf{Z}-\mathbf{x}_{t}\|_{\infty}$ for the remaining $\ell_{\infty}$ distance. If $d_{t}\le\rho_{\max}$, every coordinate reaches its target component in one slot by~\eqref{eq:approach}, and slot $t$ is a dwell slot. On a transit slot $d_{t}>\rho_{\max}$, and since each coordinate closes its own gap by $\min(\rho_{\max},|z_{m}-x_{t,m}|)$, the remaining distance satisfies $d_{t+1}=\max_{m}\max\{|z_{m}-x_{t,m}|-\rho_{\max},0\}=d_{t}-\rho_{\max}$. A journey starting at slot $t_{0}$ therefore contains at most $d_{t_{0}}/\rho_{\max}\le D/\rho_{\max}$ transit slots, and summing over the journeys gives the second bound.
\end{IEEEproof}

Under Assumption~\ref{as:exact}, the switch count is controlled exactly as in the single-antenna case~\cite[Lemma~3]{choi2026positioning}, the argument depending only on the margin in~\eqref{eq:hysteresis} and not on the dimension. With
$S_{T}\triangleq\sum_{t=1}^{T-1}\sup_{\mathbf{Z}\in\mathcal{C}}
|F_{t+1}(\mathbf{Z})-F_{t}(\mathbf{Z})|$ measuring how fast the
acquisition function moves,
\begin{equation}
  N_{\mathrm{sw}}\;\le\;1+\frac{2S_{T}}{\Delta}.
  \label{eq:switch}
\end{equation}
The drift $S_{T}$ is itself bounded as in~\cite[Lemma~6]{choi2026positioning}, with $\gamma_{w}$ now given by Lemma~\ref{lem:infogain}; the substitution is carried out in the proof of Theorem~\ref{thm:regret}. Combining~\eqref{eq:travel} with the per-slot transit loss of at most $g_{\max}$ (Assumption~\ref{as:reward}), the actuation energy and the transit loss together are bounded by $C_{0}(N_{\mathrm{sw}}+1)$ with
\begin{equation}
  C_{0}\triangleq D\left(c_{\mathrm{e}}M+\frac{g_{\max}}{\rho_{\max}}\right).
  \label{eq:c0}
\end{equation}

\subsection{Main Result}
\label{subsec:main}

Combining the information-gain bound of Lemma~\ref{lem:infogain} and the travel bound of Lemma~\ref{lem:travel} with the decomposition of~\cite{choi2026positioning} yields the following guarantee.

\vspace{0.1cm}
\begin{theorem}
\label{thm:regret}
Under Assumptions~\ref{as:reward}--\ref{as:exact}, with the window and margin set to $w=\Theta\big(M(T/V_{T})^{1/2}\big)$ and $\Delta=\Theta\big(M^{5/4}(V_{T}/T)^{1/8}\big)$, for any $\delta\in(0,1)$ Algorithm~\ref{alg:comoveucb} attains, with probability at least $1-\delta$,
\begin{equation}
  \mathcal{R}_{T}\;=\;\tilde{O}\big(M^{5/4}\,T^{7/8}V_{T}^{1/8}\big),
  \label{eq:regret-bound}
\end{equation}
where $\tilde{O}$ hides factors polylogarithmic in $T$ and $1/\delta$. Setting $\delta=1/T$, the same order bounds $\mathbb{E}[\mathcal{R}_{T}]$. Both bounds are sublinear in $T$ whenever $V_{T}=o(T)$.
\end{theorem}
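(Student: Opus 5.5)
The proof follows the decomposition of~\cite{choi2026positioning} applied on $\mathcal{C}$, with Lemma~\ref{lem:infogain} supplying the information gain and Lemma~\ref{lem:travel} with~\eqref{eq:switch} supplying the movement terms.

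\textbf{Step 1: confidence event.} Fix $\delta$ and use the sliding-window confidence bound for kernel ridge regression under drift, as in~\cite{chowdhury2017} and~\cite{choi2026positioning}. With probability at least $1-\delta$, for all $t$ and $\mathbf{x}\in\mathcal{C}$,
\[
|g_{t}(\mathbf{x})-\mu_{t}(\mathbf{x})|\le\beta\sigma_{t}(\mathbf{x})+b_{t}.
\]
Here $\beta=B+\sigma_{\epsilon}\sqrt{2(\gamma_{w}+1+\log(1/\delta))}$. The bias $b_{t}$ is at most of order $\sqrt{w}\,V_{t-w:t}$, where $V_{t-w:t}$ is the variation of $\{g_{\tau}\}$ within the window. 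Summing over $t$ gives a drift term of order $w^{3/2}V_{T}$ up to constants. By Lemma~\ref{lem:infogain}, $\gamma_{w}=O(M^{2}\log^{3}w)$, so $\beta=\tilde O(M)$.

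\textbf{Step 2: regret decomposition.} Fix the oracle path $\{\mathbf{x}_{t}^{\mathrm{o}}\}$. Partition $[T]$ into blocks of length $\Theta(w)$. In each block, compare against the single best fixed APV $\mathbf{x}^{\dagger}$. Moving the oracle onto $\mathbf{x}^{\dagger}$ costs at most $C_{0}$ per block, plus a variation term of order $w$ times the block's variation. On dwell slots, $\mathbf{x}_{t}=\mathbf{Z}_{t-1}$. There, Assumption~\ref{as:exact} together with~\eqref{eq:hysteresis} gives $F_{t}(\mathbf{Z}_{t})\ge F_{t}(\mathbf{x}^{\dagger})-\Delta$, and optimism then yields
\[
g_{t}(\mathbf{x}^{\dagger})-g_{t}(\mathbf{x}_{t})\le 2\beta\sigma_{t}(\mathbf{x}_{t})+2b_{t}+\Delta+\tilde c\,\|\mathbf{x}^{\dagger}-\mathbf{x}_{t}\|_{1}.
\]
The last term is at most $\tilde c MD$. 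Transit slots and actuation energy are charged $C_{0}(N_{\mathrm{sw}}+1)$ via Lemma~\ref{lem:travel}.

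\textbf{Step 3: summing the terms.} The windowed elliptical-potential bound gives
\[
\sum_{t}\sigma_{t}(\mathbf{x}_{t})\le\sqrt{(T/w)\,T\,\gamma_{w}},
\]
since each window is restarted $T/w$ times. The regret then collects the following contributions:
\begin{itemize}
\item an estimation term $\beta\sqrt{T^{2}\gamma_{w}/w}=\tilde O(M^{2}T/\sqrt{w})$;
\item a drift term of order $w^{3/2}V_{T}$, up to constants;
\item a persistence term $T\Delta$;
\item a movement term $C_{0}(1+2S_{T}/\Delta)$ from~\eqref{eq:switch}, plus the block-switching cost $C_{0}T/w$.
\end{itemize}
Recall $C_{0}=O(M)$. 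Next, bound $S_{T}$ by~\cite[Lemma~6]{choi2026positioning}. Per slot, $\mu_{t}$ and $\beta\sigma_{t}$ change by the effect of adding and dropping one observation plus the drift of $g$. That lemma gives $S_{T}=\tilde O(\beta T\sqrt{\gamma_{w}/w}+V_{T}\sqrt w)$, hence $S_{T}=\tilde O(M^{2}T/\sqrt w)$.

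The dominant tradeoff is $T\Delta$ against $M S_{T}/\Delta$, giving $\Delta\asymp\sqrt{MS_{T}/T}$ and a resulting term of order $\sqrt{TMS_{T}}$. Substituting $w=\Theta(M\sqrt{T/V_{T}})$ gives $S_{T}=\tilde O(M^{3/2}T^{3/4}V_{T}^{1/4})$. Then $\sqrt{TMS_{T}}=\tilde O(M^{5/4}T^{7/8}V_{T}^{1/8})$ and $\Delta=\Theta(M^{5/4}(V_{T}/T)^{1/8})$, matching~\eqref{eq:regret-bound}. The remaining terms, $M^{2}T/\sqrt w$ and the drift term under this $w$, need to be checked to be of no larger order; this depends on how the drift bias scales. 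If it scales worse than expected, the constants in $w$ would be retuned. The expectation statement follows by taking $\delta=1/T$ and bounding the failure event by $g_{\max}T\cdot\delta=O(g_{\max})$.

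\textbf{Main obstacle.} The delicate step is bounding $S_{T}$ uniformly over $\mathcal{C}$ with only polynomial dependence on $M$. This requires controlling $\sup_{\mathbf{Z}}|\sigma_{t+1}(\mathbf{Z})-\sigma_{t}(\mathbf{Z})|$ under a rank-two window update. I would first use the rank-one update formulas for the posterior, $|\Delta\sigma_{t}^{2}(\mathbf{Z})|\le 2\sigma_{f}^{2}\sigma_{t}(\mathbf{x}_{\mathrm{new}})^{2}/\varrho$ up to constants, and then sum via the information gain of Lemma~\ref{lem:infogain}. Verifying that the exponents in $M$ combine to exactly $5/4$ hinges on this step.
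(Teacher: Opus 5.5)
Your proposal is correct and follows essentially the paper's own route. Both arguments transplant the decomposition of~\cite{choi2026positioning} to $\mathcal{C}$, take a factor $M$ in both $\beta$ and $\sqrt{\gamma_{w}}$ from Lemma~\ref{lem:infogain}, charge actuation energy and transit as $C_{0}(N_{\mathrm{sw}}+1)$ with $C_{0}=\Theta(M)$ via Lemma~\ref{lem:travel} and~\eqref{eq:switch}, take $S_{T}$ from \cite[Lemma~6]{choi2026positioning}, and balance $\Delta T$ against $C_{0}S_{T}/\Delta$.

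One correction does not change the outcome. The drift part of $S_{T}$ in that lemma is $\varrho^{-3/2}\sigma_{f}^{3}w^{3/2}V_{T}$, not $V_{T}\sqrt{w}$. Balancing it against $aT\sqrt{\gamma_{w}/w}=\tilde{O}(M^{2}T/\sqrt{w})$ is what fixes $w=\Theta\big(M(T/V_{T})^{1/2}\big)$. At that $w$, both parts of $S_{T}$ are $\tilde{O}\big(M^{3/2}T^{3/4}V_{T}^{1/4}\big)$, and so are the estimation and drift terms you left unchecked; they are therefore dominated when $V_{T}=o(T)$. Two small omissions remain: the paper also carries a dominated $O(M^{2}\Delta_{g}T)$ grid-rounding error, and your failure-event bound should include the per-slot energy $c_{\mathrm{e}}M\rho_{\max}$.
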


\begin{IEEEproof}
The decomposition of~\cite[App.~A]{choi2026positioning} applies with $\mathcal{D}$ replaced by $\mathcal{C}$ and $x_{t}$ by $\mathbf{x}_{t}$. Its optimism step compares the selected target against an arbitrary fixed element of the set over which the acquisition function is maximized, which by~\eqref{eq:target} and Assumption~\ref{as:exact} is $\mathcal{C}$, the set from which the oracle~\eqref{eq:oracle} also chooses. Writing $\mathcal{R}_{T}=\mathcal{R}^{\mathrm{val}}_{T}+\mathcal{R}^{\mathrm{mov}}_{T}
+g_{\max}T_{\mathrm{trans}}$ as there, with $g_{\max}$ in place of $2g_{\max}$ since $0\le g_{t}\le g_{\max}$, three quantities differ from the single-antenna case.

The estimation term carries $\sqrt{\gamma_{w}}$, which is $O\big(M(\log w)^{3/2}\big)$ by Lemma~\ref{lem:infogain} rather than the
one-dimensional $O(\log w)$; the confidence radius $\beta=O(\sqrt{\gamma_{w}})$ inherits the same factor $M$, and so does the constant $a$ of~\cite[Lemma~6]{choi2026positioning}, which is $O(\beta)$ at a fixed output scale. The actuation energy and the transit loss are bounded by $C_{0}(N_{\mathrm{sw}}+1)$ with $C_{0}=\Theta(M)$ from~\eqref{eq:c0}, and $N_{\mathrm{sw}}+1\le2+2S_{T}/\Delta$ by~\eqref{eq:switch}. The maximization~\eqref{eq:target} is carried out on $\mathcal{X}_{g}^{M}$ rather than on $\mathcal{X}_{g}$. Rounding each coordinate of a configuration in $\mathcal{C}$ to the nearest grid point keeps it in $\mathcal{C}$ when $\Delta_{g}$ divides $d$, and changes $F_{t}$ by at most $ML_{U}\Delta_{g}/2$, where $L_{U}$ is the per-coordinate Lipschitz constant of~\cite[App.~A]{choi2026positioning}, itself $O(M)$ through $\beta$ and $\gamma_{w}$. The total grid error is therefore $O(M^{2}\Delta_{g}T)$, which is $O(M^{2})$ under $\Delta_{g}=O(D/T)$ as in~\cite{choi2026positioning} and is dominated.

The margin $\Delta$ trades the movement term $2C_{0}S_{T}/\Delta$ against the loss $\Delta T$ incurred by holding the target, and balancing the two gives $\Delta=\Theta\big(\sqrt{C_{0}S_{T}/T}\big)$ and a movement contribution of $\Theta\big(\sqrt{C_{0}S_{T}T}\big)$. By~\cite[Lemma~6]{choi2026positioning} the drift splits as $S_{T}=O\big(aT\sqrt{\gamma_{w}/w}+\varrho^{-3/2}\sigma_{f}^{3}w^{3/2}V_{T}\big)$, whose first part carries $a\sqrt{\gamma_{w}}=O(M^{2})$ up to logarithmic factors while the second is free of $M$. With $C_{0}=\Theta(M)$, the two parts of $C_{0}S_{T}T$ are of order $M^{3}T^{2}w^{-1/2}$ and $MTw^{3/2}V_{T}$, so that
\begin{equation*}
  \sqrt{C_{0}S_{T}T} =\tilde{O}\big(M^{3/2}\,T\,w^{-1/4} +M^{1/2}T^{1/2}w^{3/4}V_{T}^{1/2}\big).
\end{equation*}
Equating the two terms gives $w=\Theta\big(M(T/V_{T})^{1/2}\big)$, at which the movement contribution is $\tilde{O}\big(M^{5/4}T^{7/8}V_{T}^{1/8}\big)$ and the margin is $\Delta=\Theta\big(M^{5/4}(V_{T}/T)^{1/8}\big)$ as stated. The estimation and staleness terms of~\cite[App.~A]{choi2026positioning} are both $\tilde{O}\big(M^{3/2}T^{3/4}V_{T}^{1/4}\big)$ at this window and are dominated whenever $V_{T}=o(T)$, which gives~\eqref{eq:regret-bound}.

Finally, a single slot contributes at most $g_{\max}+c_{\mathrm{e}}M\rho_{\max}$ to the regret, so the complement of the confidence event, of probability at most $\delta$, contributes at most $\delta T(g_{\max}+c_{\mathrm{e}}M\rho_{\max})$ to $\mathbb{E}[\mathcal{R}_{T}]$, which is $O(M)$ at $\delta=1/T$, while $\ln(1/\delta)=\ln T$ is absorbed into $\tilde{O}$.
\end{IEEEproof}

\begin{remark}[Precoder and regularity]
\label{rem:precoder}
Assumption~\ref{as:reward} constrains the precoder only through the smoothness it induces in $g_{t}$. Regularized zero-forcing with a fixed $\kappa>0$ supports that smoothness uniformly, since the smallest eigenvalue of $\boldsymbol{\Psi}_{t}+\kappa\mathbf{I}$ is at least $\kappa$ for every APV and every channel realization, which keeps $g_{t}$ a smooth function of the APV with derivatives bounded independently of the channel. Pure zero-forcing admits no such deterministic bound, since two devices at similar angles render $\boldsymbol{\Psi}_{t}$ nearly singular wherever the antennas are placed, and the same regularity then holds only with high probability when the array is large relative to the number of devices. The distinction concerns the analysis alone, not the applicability of the algorithm, which observes only the reward.
\end{remark}

\begin{remark}[On Assumption~\ref{as:exact}]\label{rem:exact}
Theorem~\ref{thm:regret} separates the statistical guarantee from the tractability of~\eqref{eq:target}. If the latter is solved only approximately, with an average shortfall $\varepsilon_{\mathrm{opt}}$ from the maximum over the horizon, the optimism step loses that amount per slot on average and the bound acquires an additive $\varepsilon_{\mathrm{opt}}T$; the guarantee degrades gracefully, but is no longer sublinear unless $\varepsilon_{\mathrm{opt}}$ vanishes with $T$. Measured against exhaustive search on a coarsened grid at $M=4$, $K=2$, where enumeration is feasible, the single-start ascent of Section~\ref{subsec:ca} attains $\varepsilon_{\mathrm{opt}}=0.08$, or $0.03\sigma_{f}$; restarting from three initial points halves it at proportionate cost, without a measurable change in the objective at $M=6$, $K=4$.
\end{remark}

\section{Simulation Results}
\label{sec:sim}

We evaluate CoMoveUCB against benchmark policies across loadings and report an ablation study of its key components.

\begin{table}[t]
\centering
\caption{Kernel and algorithm parameters.}
\label{tab:params}
\renewcommand{\arraystretch}{1.1}
\begin{tabular}{@{}lll@{}}
\toprule
& Parameter & Value \\
\midrule
\multirow{5}{*}{Kernel}
 & Fringe components $Q$        & $3$ \\
 & Fringe wavelengths $\{\lambda_q\}$ & $\{4.15,1.81,1.04\}\lambda$ \\
 & Fringe / low-freq.\ damping $\varsigma_q,\varsigma_0$ & $2\lambda$, $4\lambda$ \\
 & Pairwise length scale $\varsigma_{\mathrm{p}}$ & $\lambda$ \\
 & Ridge $\varrho$               & $0.1$ \\
\midrule
\multirow{6}{*}{Algorithm}
 & Grid step $\Delta_g$          & $0.03\lambda$ \\
 & Window $w$                    & $80$ \\
 & Confidence $\beta$            & $0.5$ \\
 & Movement weight $\eta$        & $0.11$ \\
 & Margin $\Delta/\sigma_f$      & $0.022$ \\
 & Ascent rounds $R$             & $1$ \\
\bottomrule
\end{tabular}
\end{table}

\subsection{Setup}
\label{subsec:setup}

The AP is equipped with $M=6$ MAs along a region of length $D=6\lambda$ with a minimum spacing of $d=\lambda/2$, and serves $K=4$ IoT devices. The spacing constraint leaves each antenna an interval of $D-(M-1)d=3.5\lambda$. A slot lasts $T_{\mathrm{s}}=5$~ms, of which $T_{\mathrm{a}}=1$~ms is spent actuating, so an antenna advances at most $\rho_{\max}=0.06\lambda$ per slot and a full reconfiguration of the array takes about $58$ slots. The horizon of $T=6000$ slots spans $30$~s.

Each link has $L=3$ paths, one line-of-sight and two scattered, with powers $(0.6,0.2,0.2)$, arriving from a sector of $[30^\circ,150^\circ]$ and each spread over $8$ sub-rays with an angular spread of $2^\circ$. The line-of-sight path is attenuated by $10$~dB over a blockage spanning half the region. The reward is the sum rate~\eqref{eq:reward} under regularized zero-forcing with $\kappa=0.1$, at a transmit SNR of $9$~dB per device, for which $g_{\max}\approx15$. The movement cost coefficient is set to $c_{\mathrm{e}}=8$ per $\lambda$, about a fifth of the boundary $g_{\max}/(M\rho_{\max})$ beyond which the actuation energy exceeds any achievable gain in rate.

The environment is piecewise stationary: for each device, $\Gamma=20$ environment events and $\Gamma_{\mathrm{d}}=6$ device swaps are placed independently and uniformly over the horizon. An environment event perturbs the angles with standard deviation $\sigma_{J}=0.5^\circ$, refreshes about half of the gain power, and displaces the blockage with standard deviation $\sigma_{J,s}=0.5\lambda$; a device swap redraws the
link's channel independently of its predecessor's. The reward map therefore changes on average every $T/(K(\Gamma+\Gamma_{\mathrm{d}}))\approx 58$ slots, comparable to the time a full reconfiguration takes.

The policies that maintain an estimator are calibrated from $T_{0}=40$ samples of the reward map at $t=0$, which form the initial observations $\mathcal{W}_{0}$ of Algorithm~\ref{alg:comoveucb}: spatial samples of a single frozen channel snapshot at randomly drawn feasible APVs, not measurements collected over a sequence of slots, so no reachability constraint applies to them. This models a site survey obtained before deployment, and consumes none of the $T$ evaluated slots. The output scale $\sigma_{f}^{2}$ is calibrated by assigning each of the $M+\binom{M}{2}$ summands of~\eqref{eq:kernel} the empirical variance of the samples. The evaluated trajectory, subject to~\eqref{eq:reachable} throughout, begins at an equally spaced placement over the central $60\%$ of the region. These settings, with $M=6$, $K=4$, $T=6000$ and $c_{\mathrm{e}}=8$, are common to all results of this section, $K$ being varied only in Table~\ref{tab:loading}. All results are averaged over $48$ independent realizations of the channel and the noise, and margins and error bars denote the difference to the reference policy stated in each figure or table, paired within each realization, with $\pm$ one standard error.

The remaining parameters, of the kernel and the algorithm, are listed in Table~\ref{tab:params}; they were selected in preliminary runs by sweeping each with the others held fixed. The window is set to $w=80$ for three reasons: the reward map changes about every $58$ slots, so an observation older than a few such intervals describes a different map; a longer window weakens the movement price, since $\tilde{c}$ of~\eqref{eq:price} scales as $1/w$; and it raises the cost of the coordinate ascent, which evaluates the posterior $O(RMG)$ times, each at a cost of $O(Mw+w^{2})$.

\subsection{Benchmarks}
\label{subsec:benchmarks}

CoMoveUCB is compared against the following benchmarks, all charged the same actuation energy $c_{\mathrm{e}}\sum_{t}\|\mathbf{x}_{t}-\mathbf{x}_{t-1}\|_{1}$, subject to the same reachability constraint~\eqref{eq:reachable}, and, where they maximize an acquisition over configurations, using the same coordinate ascent.

\vspace{2pt}
\noindent\textbf{Fixed} holds the initial APV throughout. It incurs no actuation energy and requires no estimation, so the margin over it measures the value of repositioning itself.

\vspace{2pt}
\noindent\textbf{Random} takes a random feasible step every slot, with no estimator. Together with Fixed it bounds from below what an informed policy must beat, since repositioning without information incurs actuation energy that the rate gain need not offset.

\vspace{2pt}
\noindent\textbf{Sweep-and-hold} measures before it optimizes. The array translates rigidly, its one degree of freedom being the position of its left edge. Its first cycle measures $T_{0}$ translations during the calibration and costs no evaluated slots; every subsequent cycle, of period $P$, sweeps at full stroke for $T_{0}$ slots, reflecting at the region's boundaries and recording the reward at each position, then holds at the best recorded position until the cycle restarts. These later sweeps are charged in full. The period is selected from $\{500,1000,2000\}$, and $P=2000$ is used throughout.

\vspace{2pt}
\noindent\textbf{$\epsilon$-greedy} shares CoMoveUCB's estimator and moves to the maximizer of the posterior mean within reach, or with exploration probability $0.1$ to a random reachable APV. Its exploration is undirected, unlike the uncertainty-driven exploration of~\eqref{eq:acq}.

\vspace{2pt}
\noindent\textbf{GP-UCB}~\cite{srinivas2010,chowdhury2017} shares CoMoveUCB's kernel, window, and confidence parameter, and selects $\mathbf{x}_{t+1}=\arg\max_{\mathbf{Z}\in\mathcal{A}(\mathbf{x}_{t})}[\mu_{t}(\mathbf{Z})+\beta\sigma_{t}(\mathbf{Z})]$. It prices neither the actuation energy nor the reward lost in transit, and, like Local-UCB in Fig.~\ref{fig:selection}, never evaluates a configuration beyond one stroke.

\begin{figure}[t]
\centering
\includegraphics[width=0.95\columnwidth]{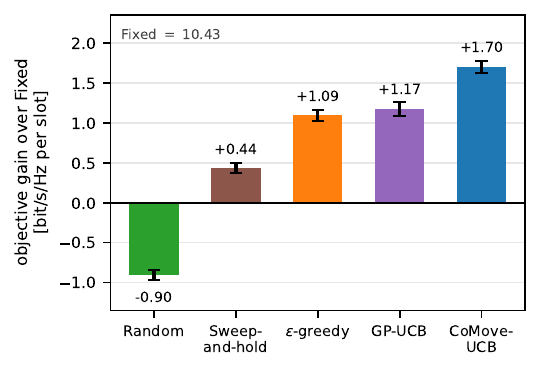}
\caption{Comparison against the benchmarks. Bars show the objective of~\eqref{eq:online} as a gain over the fixed array; error bars denote $\pm$ one standard error.}
\label{fig:main}
\end{figure}

\begin{table}[t]
\centering
\caption{Comparison across loadings.}
\label{tab:loading}
\renewcommand{\arraystretch}{1.15}
\begin{tabular}{@{}ccrrrr@{}}
\toprule
$K$ & $K/M$ & Fixed & Sweep-and-hold & GP-UCB & CoMoveUCB \\
\midrule
$2$ & $0.33$ & $6.38$  & $6.67$  & $6.54$  & $\mathbf{6.97}$  \\
$3$ & $0.50$ & $8.73$  & $9.08$  & $9.38$  & $\mathbf{9.88}$  \\
$4$ & $0.67$ & $10.43$ & $10.87$ & $11.60$ & $\mathbf{12.13}$ \\
\bottomrule
\end{tabular}
\end{table}

\begin{table*}[t]
\centering
\caption{Ablation of CoMoveUCB at $M=6$ and $K=4$. Except for Fixed, all
variants select the target over the whole of $\mathcal{C}$ and price the
movement by~\eqref{eq:price}; each differs from the one it is compared
with in a single component. Margins are paired within each realization
with the variant indicated.}
\label{tab:ablation}
\renewcommand{\arraystretch}{1.15}
\begin{tabular}{@{}llcccrrl@{}}
\toprule
& Variant & Kernel & Optimism & Persistence & Objective & Travel~[$\lambda$] & Margin \\
\midrule
      & Fixed                  & ---           & ---        & ---        & $10.43$ & $0$   & --- \\
(i)   & Posterior-mean target  & Compositional & ---        & ---        & $11.37$ & $70$  & --- \\
(ii)  & Optimistic target      & Compositional & \checkmark & ---        & $12.06$ & $528$ & $+0.70\pm0.08$ vs.\ (i) \\
(iii) & CoMoveUCB              & Compositional & \checkmark & \checkmark & $\mathbf{12.13}$ & $466$ & $+0.07\pm0.06$ vs.\ (ii) \\
\midrule
(iv)  & Unary kernel           & Unary         & \checkmark & \checkmark & $11.41$ & $83$  & $-0.72\pm0.08$ vs.\ (iii) \\
\bottomrule
\end{tabular}
\end{table*}

\subsection{Comparison against the Benchmarks}
\label{subsec:comparison}

Fig.~\ref{fig:main} compares the policies on the objective of~\eqref{eq:online}, as a gain over the fixed array. The policies fall into three groups according to what they can exploit: those that maintain no estimator, those that learn but move only within one stroke, and CoMoveUCB, which learns and selects its target over the whole configuration space. The gain grows from one group to the next.

Without an estimator, repositioning yields no reliable gain. Random falls $0.90$~bit/s/Hz per slot below the fixed array, since undirected movement incurs the actuation energy without a compensating gain in rate. Sweep-and-hold gains $0.44$ by measuring before committing, but achieves only a small fraction of the available gain, for two reasons: its measurements cover only the one-dimensional subset of $\mathcal{C}$ reached by translating the array rigidly, so the spacing between antennas is never adjusted; and the configuration it selects is held for a cycle of $2000$ slots, whereas the reward map changes about every $58$ slots, so the held configuration is outdated for most of each cycle.

The policies that learn continuously, $\epsilon$-greedy and GP-UCB, track the reward map and gain $1.09$ and $1.17$. Both, however, choose each move within one stroke of the current configuration, and therefore tend to settle on a local maximum near where the array already is (cf.\ Fig.~\ref{fig:selection}): a better configuration farther away is never visited, regardless of how promising the estimate makes it. CoMoveUCB selects its target over the whole of $\mathcal{C}$ and approaches it over several slots, accepting the transit loss when the acquisition justifies it, and gains $1.70$. Its margin over GP-UCB, $0.53\pm0.07$, reflects two differences, the range over which the target is sought and the pricing of the movement, as the two policies otherwise share the estimator, kernel, window, and confidence parameter.

Table~\ref{tab:loading} repeats the comparison at $K=2$, $3$, and $4$ with the array size and all hyperparameters held fixed. CoMoveUCB leads at every loading. Its margin over Sweep-and-hold increases with $K/M$, from $0.30\pm0.04$ to $1.26\pm0.09$~bit/s/Hz per slot: separating more devices requires the antennas to move relative to one another, which a rigidly translating array cannot do. Its margin over GP-UCB varies far less, between $0.43\pm0.03$ and $0.53\pm0.07$, so the benefit of looking beyond one stroke is largely insensitive to the loading, whereas that of adjusting the relative positions grows with it.

\subsection{Ablation}
\label{subsec:ablation}

Table~\ref{tab:ablation} builds CoMoveUCB up one component at a time. The baseline (i) selects as its target the configuration with the best posterior mean, net of the movement price, and recomputes it at every slot. Adding the confidence term of~\eqref{eq:acq} yields the optimistic target (ii), and adding the margin of~\eqref{eq:hysteresis} yields CoMoveUCB (iii). Variant (iv) retains both components and replaces the compositional kernel by its unary part.

Optimism accounts for most of the gain, $0.70\pm0.08$~bit/s/Hz per slot from (i) to (ii). Without the confidence term the array moves only toward configurations its estimate already favors, and since an unvisited configuration has an uninformative posterior mean, it rarely leaves its neighborhood, covering $70\lambda$ over the horizon. The uncertainty bonus draws it to unexplored configurations, over $528\lambda$, among which it finds better ones.

Persistence leaves the objective unchanged within the error, $0.07\pm0.06$, while reducing the travel from $528\lambda$ to $466\lambda$. Its role is to secure the guarantee rather than to improve the objective. Optimism certifies the target, not the configuration the array occupies, and the two coincide only once the target is reached. Without the margin, the target may be replaced before the array arrives; if this recurs, the array remains in transit at every slot, the transit loss of Lemma~\ref{lem:travel} grows linearly in $T$, and the bound of~\eqref{eq:regret-bound} becomes vacuous. That case does not arise in these simulations, and the safeguard is obtained at no measurable cost in reward.

Replacing the kernel by its unary part $\sum_{m}k_{\mathrm{u}}(x_{m},x_{m}')$, as in (iv), costs $0.72\pm0.08$~bit/s/Hz per slot. The estimator can then no longer represent how the position of one antenna changes the value of the others, and the array settles on a configuration reached with a fraction of the travel, $83\lambda$ against $466\lambda$. The pairwise terms are what identify a distant configuration as superior to the one currently occupied.

\section{Conclusion}
\label{sec:conclusion}

We addressed the CSI-free positioning of a movable antenna array serving multiple IoT devices under a per-slot displacement limit and an actuation-energy cost. Motivated by the dependence of the multiuser sum rate on the positions through a sum of per-antenna terms, we designed a compositional kernel that retains interactions up to antenna pairs and keeps the maximum information gain polynomial in the number of antennas. On this kernel, CoMoveUCB selects its target over the whole configuration space and retains it under a persistence rule, attaining a dynamic regret sublinear in the horizon with a polynomial dependence on the array size. Simulations showed that it outperforms fixed, measure-then-optimize, and reachability-confined benchmarks at every loading tested, with most of the gain due to optimistic target selection and the pairwise terms of the kernel. Since the AP learns from the achieved rate alone, the approach requires neither pilots nor channel feedback from the devices, which suits low-power IoT deployments. Extending it to planar apertures and validating it on a hardware prototype remain open.

\bibliographystyle{IEEEtran}
\bibliography{IEEEIoTJ_OnlineMA_refs}

\end{document}